\def\BibTeX{{\rm B\kern-.05em{\sc i\kern-.025em b}\kern-.08em
    T\kern-.1667em\lower.7ex\hbox{E}\kern-.125emX}}
\newcommand{\E}{\mathbb{E}}
\newcommand{\PR}{\mathbb{P}}
\newcommand{\calI}{\mathcal{I}}
\newcommand{\calA}{\mathcal{A}}
\newcommand{\calF}{\mathcal{F}}
\newcommand{\calZ}{\mathcal{Z}}
\newcommand{\calB}{\mathcal{B}}
\newcommand{\bfz}{\mathbf{z}}
\newcommand{\bfI}{\mathbf{I}}
\newcommand{\bfm}{\mathbf{m}}
\newcommand{\bfx}{\mathbf{x}}
\newcommand{\bfX}{\mathbf{X}}
\newcommand{\set}[1]{\left\{#1\right\}}
\newtheorem{thm}{Theorem}
\newtheorem{pro}{Proposition}
\begin{document}

\title{Optimally revealing bits for rejection sampling}

\author{\IEEEauthorblockN{Louis-Roy Langevin}
\IEEEauthorblockA{\textit{Dept. of Mathematics and Statistics} \\
\textit{McGill University}\\
Montréal, Canada \\
louis-roy.langevin@mail.mcgill.ca}
\and
\IEEEauthorblockN{Alex Waese-Perlman}
\IEEEauthorblockA{\textit{Dept. of Mathematics and Statistics} \\
\textit{McGill University}\\
Montréal, Canada \\
alex.waese-perlman@mail.mcgill.ca}
}

\maketitle
\thispagestyle{fancy}

\begin{abstract}
Rejection sampling is a popular method used to generate numbers that follow some given distribution. We study the use of this method to generate random numbers in the unit interval from increasing probability density functions. We focus on the problem of sampling from $n$ correlated random variables from a joint distribution whose marginal distributions are all increasing. We show that, in the worst case, the expected number of random bits required to accept or reject a sample grows at least linearly and at most quadratically with $n$.
\end{abstract}

\small	
\noindent \textbf{\textit{Keywords---}\textit{rejection sampling, random number generation, random bits}}

\section{Introduction}
Understanding how much information we need to learn about a number to answer questions about it is a fundamental problem in computer science. Imagine you pick a random number $X$ uniformly between 0 and 1, but you don’t know its exact value. Instead, you can reveal the digits of its binary representation one at a time. For example, if $X = 1/3$, its binary form is $0.010101...$, and each digit gives a clue about where $X$ lies.

Each digit of $X$ is like a fair coin toss — either 0 or 1 with equal chance. If you want to know whether $X$ is less than some fixed number $x$, on average you need to look only at the two first bits of $X$ to find out.

Now, consider two random numbers, $X_1$ and $X_2$, chosen independently and uniformly between 0 and 1. Our goal is to decide if $X_1 < X_2$ by revealing bits from either number, one by one, in any order we choose. The best way is to alternate between revealing bits from $X_1$ and $X_2$. With this strategy, we expect to reveal about 4 bits in total before finding out the answer.

We can generalize this question: given a known increasing function $f$, how many bits from $X_1$ and $X_2$ must we reveal to decide if $f(X_1) < X_2$ on average? This question is central to our work and has important applications in a classic technique called rejection sampling.

\subsection{Rejection sampling}

The challenge of generating random numbers has been addressed by many, including Devroye \cite{dev}, and remains an active topic of interest in computer science. Various methods based on quantum computing \cite{ce} and spin electronics \cite{dcvrkh} have been recently developed and have many applications in statistics, physics, finance, etc.

Rejection sampling is a popular method for generating random samples from given probability distributions. First introduced by John von Neumann in 1951 \cite{von}, it is widely used in many fields including machine learning \cite{kfjbb}\cite{xyxpwsljzxd}, cryptography \cite{ar}, and numerical analysis.

Here is how it works: given a function $f$ that corresponds to a probability density for the points in $[0,1)^n$, we repeatedly pick random points $(X_1,\ldots,X_n)$ and $X_{n+1}$ uniformly and independently in $[0,1)^n$ and $[0,1)$, respectively. We keep doing this until $f(\bfX)$ is less than $X_{n+1}$. The point $\bfX$ we stop at then follows a distribution proportional to $f$.

In the following picture, we ran rejection sampling many times with $n=1$. The points in blue are the ones that have been accepted and outputted because they lie under the function $f$. Notice that the values of $\bfX$ for which $f$ is higher are outputted more often than those where $f$ has low value, as highlighted by the two blue intervals.

\begin{figure}[h]
    \centering
    \includegraphics[width=200pt]{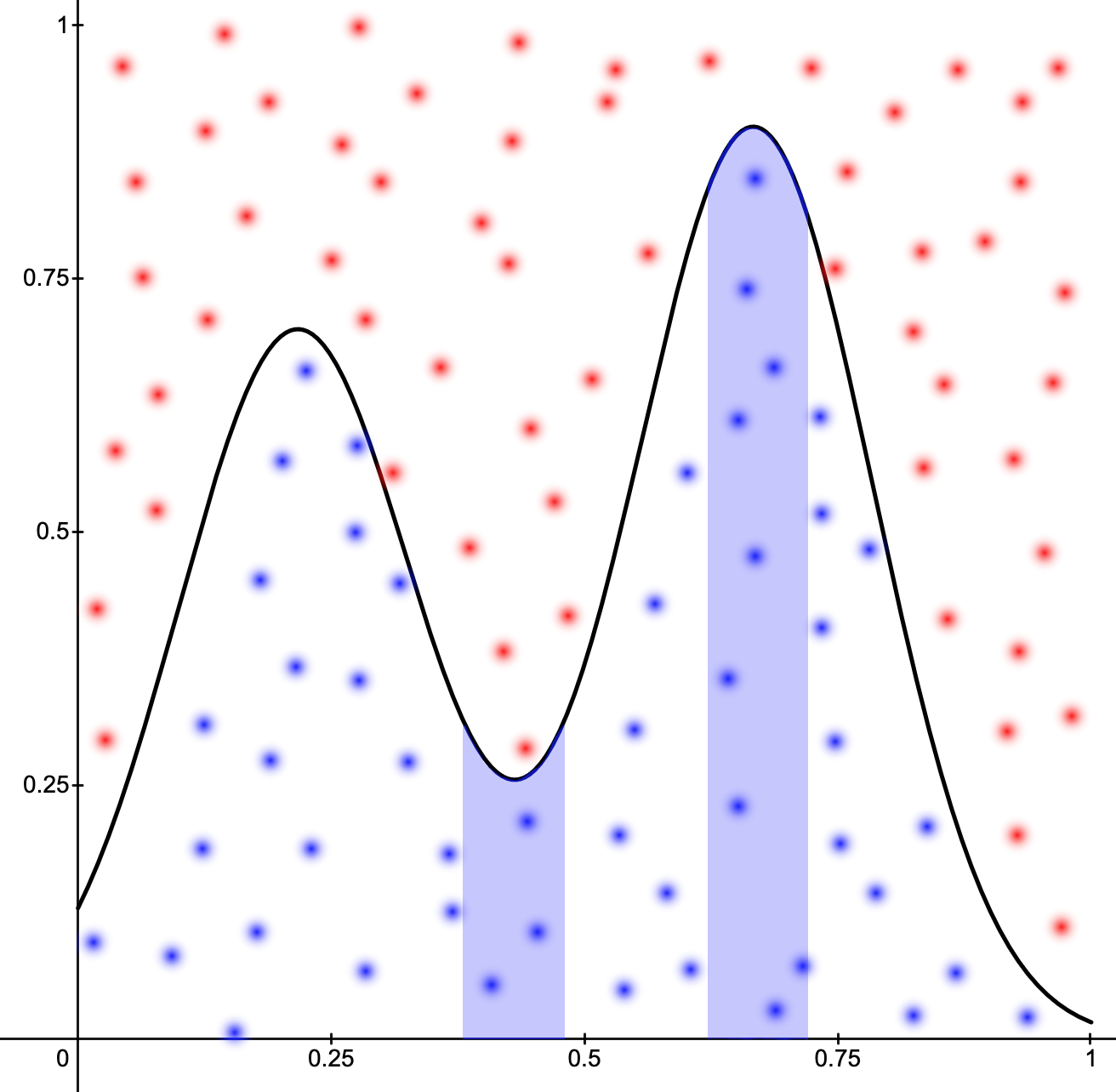}
    \caption{Visual representation of rejection sampling with $n=1$}
\end{figure}

When implementing rejection sampling on a computer, one challenge is to minimize how many bits of these numbers we need to reveal before knowing if $f(\bfX) < X_{n+1}$. This is the problem we focus on in this paper.

We assume we can compute $f(\bfx)$ instantly for any input $f(\bfx)$, so the challenge is to minimize queries to the bits of the $X_j$'s.

We focus on the case where $f$ is increasing: if every coordinate of $\bfx'$ is less than or equal to the corresponding coordinate of $\bfx$, then $f(\bfx') \le f(\bfx)$. This property lets us use just two points to decide if $f(\bfX)$ is bigger or smaller than $X_{n+1}$, which makes the algorithm much more realistic to implement.

\subsection{The setup}

Given a number $x \in [0,1)$, let $x^{(i)} \in \set{0,1}$ be the $i$'th bit of its binary expansion, i.e. $x = 0.x^{(1)}x^{(2)}x^{(3)}\ldots$. We assume that $(x^{(i)})_{i \ge 1}$ does not end with a tail of repeating 1's. This way, we have a bijection between $[0,1)$ and all such binary sequences.  Finally, denote $\calF_n$ to be the set of all increasing functions from $[0,1]^n$ to $[0,1)$.

To represent the algorithms we are studying, we will first consider deterministic inputs rather than random variables. We introduce the alphabet $\set{\bar{x}_1,\ldots,\bar{x}_{n+1}}$, where $\bar{x}_1,\ldots,\bar{x}_{n+1}$ are symbols that we call \textit{cuts}. An algorithm takes in a point $(\bfx,x_{n+1}) \in [0,1)^{n+1}$ and an increasing function $f \in \calF_n$, and outputs a string in $\set{\bar{x}_1,\ldots,\bar{x}_{n+1}}^*$ (possibly of infinite length) that represents the exact sequence of bits that have been revealed. For example, if the output of the algorithm is $\bar{x}_1\bar{x}_2\bar{x}_1\bar{x}_{n+1}\bar{x}_1$, it means that the bits $x^{(1)}_1,x^{(1)}_2,x^{(2)}_1,x^{(n+1)}_1,x^{(3)}_1$ have been revealed in this exact order. 

Such an algorithm is thus a function
\begin{align*}
    A : \calF_n \times [0,1)^{n+1} \to \set{\bar{x}_1,\ldots,\bar{x}_{n+1}}^*\,,
\end{align*}
with certain restrictions. Let $p_1,\ldots,p_{n+1} \ge 1$ be integers and $f \in \calF_n$. If $(\bfx,x_{n+1}),(\bfx',x'_{n+1}) \in [0,1)^n$ are such that $x^{(i)}_j = x'^{(i)}_j$ for all $1 \le i \le p_j, 1 \le j \le n+1$, then if $s$ is a prefix of $A(f,\bfx,x_{n+1})$ that contains $\bar{x}^{(j)}$ at most $p_j$ times for all $1 \le j \le n+1$, then any prefixes of lengths $|s|+1$ or less of $A(f,\bfx,x_{n+1})$ and $A(f,\bfx',x'_{n+1})$ must be identical. This restriction ensures that the values of the bits that are not revealed cannot have an impact on which bit is queried next.

The last restriction is that given $(\bfx,x_{n+1}) \in [0,1)^{n+1}$ and $f \in \calF_n$, the string of revealed bits outputted by these algorithms must provide sufficient information to determine whether $f(\bfx) < x_{n+1}$ or not. Let $s$ be a string of cuts, $k_i$ be the number of $x^{(i)}$'s in $s$, and $x^*_i = 0.x_i^{(1)}x_i^{(2)}\ldots x^{(k_i)}$, i.e. $x^*_i$ is the number consisting of the first $k_i$ bits of $x_i$. After revealing its first $k_i$ bits, $x_i$ could take any value in $[x_i^*,x_i^*+2^{-k_i})$. We thus have that $f(\bfx) < x_{n+1}$ almost surely\footnote{Here "almost surely" means that the set of values of $(\bfx,x_{n+1})$ that makes the equation false has measure $0$.} if and only if $f(x^*_1+2^{-k_1},\ldots,x^*_n+2^{-k_n}) < x_{n+1}^*$, and $f(\bfx) > x_{n+1}$ almost surely if and only if $f(\bfx^*) > x_{n+1}^* + 2^{-k_{n+1}}$. Note that $f(\bfx) \ne x_{n+1}$ almost surely.

Let us call $S = [\bfx^*_1,\bfx^*_1+2^{-k_1})\times\ldots\times[\bfx^*_{n+1},\bfx^*_{n+1}+2^{-k_{n+1}})$ the feasible set of $A$ after string $s$, i.e. the set of possible values of $(\bfx,x_{n+1})$. If $s$ does not satisfy any of the two halting conditions mentioned in the previous paragraph, then the algorithm cannot halt after outputting $s$. In this case, we say that $f$ \textit{crosses} $S$, because this means that $S\cap\set{(\bfx,x_{n+1}) : f(\bfx) < x_{n+1}}$ and $S\cap\set{(\bfx,x_{n+1}) : f(\bfx) > x_{n+1}}$ both have positive measure.

\subsection{The problem and our results}

The goal of this paper is to determine which algorithms query the lowest total number of bits from $X_1,\ldots,X_{n+1}$ in the worst case, on average. Formally, letting $A(f) = A(f,X_1,\ldots,X_{n+1})$, we want to study the value and the asymptotic behavior of
\[
\calB(n) = \inf_{A \in \calA_n} \sup_{f \in \calF_n} \E|A(f)|\,
\]
and find good algorithms that reach this value for all $n \ge 1$. We also aim to find what functions maximize the expected number of bits these algorithms reveal, i.e. the worst case functions.

In section \ref{ub}, we find a general upper bound on the value of $\calB(n)$. We prove in Theorem \ref{ub:thm1} that this bound grows at most quadratically with $n$, i.e. that $\calB(n)$ is $O(n^2)$.  In Section \ref{lb}, we show that $B(n) \ge n+1$, meaning that $B(n)$ is $\Omega(n)$.

\subsection{Preliminary results}

Throughout this paper, we will use the fact that if $N \ge 0$ is an integer valued random variable, then
\begin{equation}
\label{eq:exp}
    \E[N] = \sum_{\ell \ge 0} \PR(N > \ell).
\end{equation}
This can be seen after writing $N = \sum_{\ell \ge 0} \mathbf{1}_{N > \ell}$ and using linearity of expectation on the right hand side.

Next, let $x \ge 0$ be any real number and $n \ge 1$ be an integer. Then,
\begin{equation}
\label{eq:ineq}
    1 - (1 - x)^n \le nx
\end{equation}
with equality if and only if $x = 0$. To see this, take the derivatives on both sides to obtain $n(1-x)^{n-1}$ and $n$, respectively. For every $x > 0$, the derivative of the right hand side is strictly bigger, which gives us the inequality.

\section{An upper bound for $\calB(n)$}\label{ub}

First, we bound $\calB(n)$ above by constructing the naive alternating algorithm $ALT_n$ defined as follows. $ALT_n(f,\bfx,x_{n+1})$ is equal to the shortest possible prefix of $(\bar{x}_1\bar{x}_2\ldots\bar{x}_{n+1})^*$ that allows the algorithm to halt (of infinite length if and only if $f(\bfx) = x_{n+1}$). Clearly the two conditions for $ALT_n$ to be in $\calA_n$ are satisfied. We start the analysis of the efficiency of $ALT_n$ with the following preliminary result.

\begin{pro}
\label{ub:pro1}
    Let $M_1,\ldots,M_{n+1} \ge 1$ be integers and let $f : [0,M_1)\times[0,M_n) \to [0,M_
    {n+1})$ be an increasing function. For $\bfz = (z_1,\ldots,z_{n+1}) \in \set{1,2,\ldots}^{n+1}$ with $1 \le z_j \le M_j$, let $\bfI_\bfz = [z_1-1,z_1)\times\ldots\times[z_{n+1}-1,z_{n+1})$. Then, the number of such $\bfI_\bfz$'s the function $f$ crosses is at most $\prod_{j=1}^{n+1} M_j - \prod_{j=1}^{n+1} (M_j - 1)$.
\end{pro}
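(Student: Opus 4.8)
The plan is to first turn the geometric ``crossing'' condition into a clean per-column count, and then induct on $n$ by slicing along the first coordinate. Fix a base $\bfz' = (z_1,\ldots,z_n)$ and let $B_{\bfz'} = [z_1-1,z_1)\times\cdots\times[z_n-1,z_n)$ be the associated cube in the first $n$ coordinates. Since $f$ is increasing, $\inf_{B_{\bfz'}} f = f(z_1-1,\ldots,z_n-1)$ and $\sup_{B_{\bfz'}} f$ is approached at the opposite corner. Unwinding the definition of crossing shows that $f$ can cross $\bfI_\bfz$ only if $\inf_{B_{\bfz'}} f < z_{n+1} < \sup_{B_{\bfz'}} f + 1$, and the number of integers $z_{n+1}$ in this open interval is exactly $\lceil \sup_{B_{\bfz'}} f\rceil - \lfloor \inf_{B_{\bfz'}} f\rfloor$. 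I would therefore introduce the over-count
\[
C(f) = \sum_{\bfz'} \left(\lceil \sup_{B_{\bfz'}} f\rceil - \lfloor \inf_{B_{\bfz'}} f\rfloor\right),
\]
observe that the number of crossed cubes is at most $C(f)$, and prove $C(f) \le \prod_{j=1}^{n+1} M_j - \prod_{j=1}^{n+1}(M_j-1)$ instead. Working with $C(f)$ rather than the true crossing count is what lets me ignore the measure-zero degeneracies (a jump of $f$ at a corner), where the crossing condition is only a one-sided implication.

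I would establish the bound on $C(f)$ by induction on $n$. The base case $n=0$ is immediate: $f$ is a constant $c\in[0,M_1)$ and $C(f)=\lceil c\rceil-\lfloor c\rfloor\le 1=M_1-(M_1-1)$. For the inductive step I slice along $x_1$. Writing $\bfx''=(x_2,\ldots,x_n)$ and, for each $i\in\{1,\ldots,M_1\}$, setting $f_i^-(\bfx'')=f(i-1,\bfx'')$ and $f_i^+(\bfx'')=\sup_{x_1<i}f(x_1,\bfx'')$, monotonicity gives $\inf_{B_{\bfz'}}f=\inf_{B''}f_i^-$ and $\sup_{B_{\bfz'}}f=\sup_{B''}f_i^+$ whenever $z_1=i$, where $B''$ is the corresponding $(n-1)$-dimensional base. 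Hence $C(f)=\sum_{i=1}^{M_1}\sum_{\bfz''}\left(\lceil\sup_{B''}f_i^+\rceil-\lfloor\inf_{B''}f_i^-\rfloor\right)$. The crucial structural fact is the chain $f_i^-\le f_i^+\le f_{i+1}^-$, again from monotonicity in $x_1$. Setting $\phi_k=f(k-1,\cdot)$ for $1\le k\le M_1$ and $\phi_{M_1+1}=f_{M_1}^+$, and letting $U_k,L_k$ denote the sums of $\lceil\sup_{B''}\phi_k\rceil$ and $\lfloor\inf_{B''}\phi_k\rfloor$ over the $(n-1)$-dimensional bases, the relation $f_i^+\le\phi_{i+1}$ bounds the $i$-th slab by $U_{i+1}-L_i$. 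The point is that $U_k-L_k=C(\phi_k)$, so the induction hypothesis applies to each $\phi_k$ with $k\le M_1$.

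Summing and telescoping then gives $C(f)\le\sum_{i=1}^{M_1}(U_{i+1}-L_i)=(U_{M_1+1}-L_1)+\sum_{k=2}^{M_1}(U_k-L_k)$. By the induction hypothesis the last sum is at most $(M_1-1)R$ with $R=\prod_{j=2}^{n+1}M_j-\prod_{j=2}^{n+1}(M_j-1)$, while the boundary term is controlled crudely by $U_{M_1+1}-L_1\le U_{M_1+1}\le\prod_{j=2}^{n+1}M_j$, since each ceiling is at most $M_{n+1}$ and there are $\prod_{j=2}^nM_j$ bases. Substituting and simplifying collapses to exactly $\prod_{j=1}^{n+1}M_j-\prod_{j=1}^{n+1}(M_j-1)$. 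The step I expect to be most delicate is arranging the telescoping so that the off-by-one pairing of $U_{i+1}$ with $L_i$ (forced by the gap $f_i^+\le f_{i+1}^-$ rather than an equality), together with the single boundary term $U_{M_1+1}-L_1$, reproduces precisely the missing correction $\prod_{j=2}^{n+1}(M_j-1)$; making this bookkeeping land on the stated bound rather than on a weaker estimate is the crux of the argument.
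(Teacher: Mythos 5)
Your proof is correct, and it takes a genuinely different route from the paper's. The paper argues non-inductively: it partitions the grid of boxes into ``anti-diagonal'' chains $\set{(z^*_1+m,\ldots,z^*_n+m,\,z^*_{n+1}-m) : m \ge 0}$, one chain per boundary box (those with some $z_i=1$, $i\le n$, or $z_{n+1}=M_{n+1}$), shows by monotonicity that $f$ crosses at most one box per chain, and then just counts chains: (all boxes) minus (boxes off the boundary) $=\prod_j M_j - \prod_j (M_j-1)$. You instead replace the crossing count by the column-wise over-count $C(f)=\sum_{\bfz'}\bigl(\lceil\sup f\rceil-\lfloor\inf f\rfloor\bigr)$ and run an induction on dimension with a telescoping sum over slabs in $x_1$; I checked the bookkeeping (the necessary condition $\inf_{B}f< z_{n+1}<\sup_B f+1$ for crossing, the chain $f_i^-\le f_i^+\le f_{i+1}^-$, the identity $U_k-L_k=C(\phi_k)$, and the final algebra $\prod_{j\ge2}M_j+(M_1-1)\bigl[\prod_{j\ge2}M_j-\prod_{j\ge2}(M_j-1)\bigr]=\prod_j M_j-\prod_j(M_j-1)$) and it all lands exactly on the stated bound. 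The paper's argument is shorter and makes the extremal structure transparent (one crossed box per diagonal), while yours isolates a reusable quantity $C(f)$ of ``total-variation'' flavor that cleanly absorbs the measure-zero degeneracies in the crossing definition and could be of independent use; the cost is the heavier slab/telescope bookkeeping you yourself flag as the crux.
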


\begin{proof}
    Let $\calZ$ be the set of all $\bfI_{\bfz^*}$'s such that $z^*_{n+1} = M_{n+1}$ or $z^*_i = 1$ for some $1 \le i \le n$. For each $\bfI_{\bfz^*} \in \calZ$, let $\calI_{\bfz^*}$ be the set of all $\bfI_\bfz$ such that there exists an $m \ge 0$ with $z_{n+1} = z^*_{n+1}-m$ and $z_i = z^*_i+m$ for all $1 \le i \le n$.

    We show that $f$ cannot cross more than one rectangle in any $\calI_{\bfz^*}$. Let $\bfI_\bfz,\bfI_{\bfz'} \in \calI_{\bfz^*}$ and assume that $z_{n+1} \le z'_{n+1}-1$. If $f$ crosses $\bfI_\bfz$, then $f(z_1-1,\ldots,z_n-1) < z_{n+1}$, which implies $f$ does not cross $\bfI_{\bfz'}$ since $z'_i \le z_i - 1$ for all $1 \le i \le n$, and thus $f(z'_1,\ldots,z'_n) < z'_{n+1}-1$ since $f$ is increasing.

    Finally, notice that the total number of $\bfI_\bfz$'s is $\prod_{j=1}^{n+1} M_j$, and the number of $\bfI_\bfz$'s that are not in $\calZ$ is $\prod_{j=1}^{n+1} (M_j-1)$. Furthermore, every $\bfI_\bfz$ is contained in some $\calI_{\bfz^*}$, and since at most one $\bfI_\bfz$ per $\calI_{\bfz^*}$ can be crossed by $f$, this implies that at most $\prod_{j=1}^{n+1} M_j - \prod_{j=1}^{n+1} (M_j-1)$ of the $\bfI_\bfz$'s can be crossed by $f$ in total, i.e. at most one for each $\calI_{\bfz^*}$.
\end{proof}

\begin{thm}
\label{ub:thm1}
    Let $f : [0,1)^n \to [0,1)$ be any increasing function. Then,
    \begin{align*}
        \E\left|ALT_n(f)\right| \le \sum_{k \ge 0} \sum_{i=0}^n \left( 1  - \frac{(2^{k+1}-1)^i(2^k-1)^{n+1-i}}{2^{k(n+1)+i}} \right)\,.
    \end{align*}
    In particular, $\calB(n) = O(n^2)$.
\end{thm}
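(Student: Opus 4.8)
The plan is to study $N = |ALT_n(f)|$, the number of revealed bits, through the tail-sum identity \eqref{eq:exp}, so that $\E[N] = \sum_{m \ge 0} \PR(N > m)$, and to reinterpret each tail probability as a crossing count governed by Proposition \ref{ub:pro1}. First I would write $m = k(n+1) + i$ with $k \ge 0$ and $0 \le i \le n$, and note that after revealing $m$ bits in round-robin order the coordinates $x_1,\ldots,x_i$ have $k+1$ revealed bits while $x_{i+1},\ldots,x_{n+1}$ have exactly $k$. Hence the feasible set is a single cell of the grid partitioning $[0,1)^{n+1}$ into boxes of side $2^{-(k+1)}$ along the first $i$ coordinates and $2^{-k}$ along the remaining $n+1-i$; each box has volume $2^{-m}$ and there are exactly $2^m$ of them.

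Since $(X_1,\ldots,X_{n+1})$ is uniform, the box containing the sample point is uniform among these $2^m$ boxes. The feasible boxes are nested as $m$ increases, so $f$ crosses the box after $m$ bits if and only if it crosses every earlier feasible box; thus $ALT_n$ has not halted by step $m$ precisely when $f$ crosses the current box, and $\PR(N > m)$ equals $2^{-m}$ times the number of grid boxes crossed by $f$. To count these I would rescale coordinate $j$ by its number of cells ($2^{k+1}$ for $j \le i$, otherwise $2^k$) and rescale the codomain of $f$ by $2^k$; this maps the grid onto the unit boxes $\bfI_\bfz$ of Proposition \ref{ub:pro1} with $M_j = 2^{k+1}$ for $j \le i$ and $M_j = 2^k$ for $i < j \le n+1$, preserving both monotonicity and the crossing relation. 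The proposition then bounds the number of crossed boxes by $\prod_{j=1}^{n+1} M_j - \prod_{j=1}^{n+1}(M_j-1) = 2^m - (2^{k+1}-1)^i(2^k-1)^{n+1-i}$, giving
\[
\PR(N > m) \le 1 - \frac{(2^{k+1}-1)^i (2^k-1)^{n+1-i}}{2^{k(n+1)+i}}\,.
\]
Summing over all $m$, reindexed as the double sum over $k$ and $i$, yields the claimed inequality.

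For the asymptotics I would rewrite the summand as $1 - (1 - 2^{-(k+1)})^i (1 - 2^{-k})^{n+1-i}$. The $k=0$ block contributes exactly $n+1$, since $1 - 2^{-k} = 0$ forces every summand to equal $1$ (reflecting that the first $n+1$ bits are always revealed). For $k \ge 1$, the bound $1 - 2^{-(k+1)} \ge 1 - 2^{-k}$ gives $1 - (1-2^{-(k+1)})^i (1-2^{-k})^{n+1-i} \le 1 - (1 - 2^{-k})^{n+1} \le (n+1)2^{-k}$ by inequality \eqref{eq:ineq}; hence each inner sum over $i$ is at most $(n+1)^2 2^{-k}$, and $\sum_{k \ge 1} 2^{-k} = 1$ contributes $(n+1)^2$. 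Altogether $\E|ALT_n(f)| \le (n+1) + (n+1)^2$ uniformly in $f$, so $\calB(n) \le (n+1) + (n+1)^2 = O(n^2)$.

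I expect the main obstacle to be the second step: justifying rigorously that the tail event $\{N > m\}$ coincides with ``$f$ crosses the current feasible box'' (using nestedness to exclude earlier halting) and setting up the affine rescaling of both domain and codomain so that Proposition \ref{ub:pro1} applies verbatim. The subsequent counting, the passage from crossed boxes to $\PR(N > m)$, and the geometric-series estimate are then routine.
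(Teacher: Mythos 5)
Your proposal is correct and follows essentially the same route as the paper: decompose the query count as $m = k(n+1)+i$, identify the feasible set after $m$ queries with a cell of the dyadic grid, apply Proposition \ref{ub:pro1} after rescaling, and sum the tail probabilities via equation \ref{eq:exp}. Your handling of the $O(n^2)$ step is marginally sharper (splitting off the $k=0$ block gives $(n+1)+(n+1)^2$ rather than the paper's $2(n+1)^2$), but the method is the same.
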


\begin{proof}
    Let $0 \le i \le n$ and $k \ge 0$ be integers. By observing the algorithm, we see that if $ALT_n$ has not halted after $k(n+1) + i$ queries, then its current string contains $k+1$ occurrences of each of $\bar{x}_1,\ldots,\bar{x}_i$, and $k$ occurrences of each of $\bar{x}_{i+1},\ldots,\bar{x}_n)$.
    
    Let $1 \le m_j \le 2^{k+1}$ be integers for $1 \le j \le i$ and let $1 \le m_j \le 2^k$ be integers for $i < j \le n+1$. Then, let $\bfI_\bfm=$\\ $\left[2^{-k-1}(m_1-1),2^{-k-1}m_1\right)\times\ldots\times\left[2^{-k-1}(m_{n+1}-1),2^{-k-1}m_{n+1}\right)$. Since $X_1,\ldots,X_{n+1}$ are uniform on $[0,1)$, every $\bfI_\bfm$ is equally likely to contain $(X_1,\ldots,X_{n+1})$. Applying Proposition \ref{ub:pro1} on a rescaling of $[0,1)^{n+1}$ gives us that at most $2^{k(n+1)+i} - (2^{k+1}-1)^i(2^k-1)^{n+1-i}$ of the $\bfI_\bfm$ can be crossed by $f$. Since the feasible set of $ALT_n$ must be equal to some $\bfI_\bfm$ after $k(n+1)+i$ cuts, this means the probability that $f$ still crosses the feasible set after $k(n+1)+i$ cuts is at most $[2^{k(n+1)+i} - (2^{k+1}-1)^i(2^k-1)^{n+1-i}]\cdot\PR(\bfI_\bfm)$. Since $\PR(\bfI_\bfm) = 2^{-k(n+1)-i}$, this means that
    \[
        \PR(|ALT_n(f)| > k(n+1) + i) = 1 - \frac{(2^{k+1}-1)^i(2^k-1)^{n+1-i}}{2^{k(n+1)+i}},
    \]
    thus equation \ref{eq:exp} gives us the upper bound we are looking for.

    To see that $\calB(n) = O(n^2)$, note that since $\PR(|ALT_n(f)| > \ell)$ decreases when $\ell$ increases, then we can naively bound the right hand side of this Theorem's inequality with
    \begin{equation}
    \label{eq:slack1}
        \sum_{k \ge 0} \sum_{i=0}^n \left(1 - \frac{(2^k-1)^{n+1}}{2^{k(n+1)}}\right).
    \end{equation}
    The terms inside of the double sum can be written as $1 - (1-2^{-k})^{n+1}$. We can now apply equation \ref{eq:ineq} to bound this above with $(n+1)2^{-k}$. In the end, this gives us an upper bound of
    \begin{equation}
    \label{eq:slack2}
        \sum_{k \ge 0} (n+1)^22^{-k} = 2(n+1)^2.
    \end{equation}
\end{proof}

When we plot original theorem's equation in Desmos and compare it with $2(n+1)^2$, we notice a clear gap between them. This could be due either to the bound in equation \ref{eq:slack1} or the bound in equation \ref{eq:slack2}. After plotting the three, it looks like equation \ref{eq:slack1} is very tight and may only alter the upper bound by a constant factor. However, equation \ref{eq:slack2} seems to make the bound go from a function that seems empirically close to $\Theta(n^{1.175})$ to a quadratic one.

Here are the plots we are referring to. In red, we have the first Theorem's inequality. In green, we have the bound from equation \ref{eq:slack1}. In purple, we have the bound from equation \ref{eq:slack2}. Both the green and the red functions look like they follow asymptotic behaviors of order $\Theta(n^{1.175})$. This can be seen by changing the value of $b$ to match the function.

\begin{figure}[h]
    \centering
    \includegraphics[width=250pt]{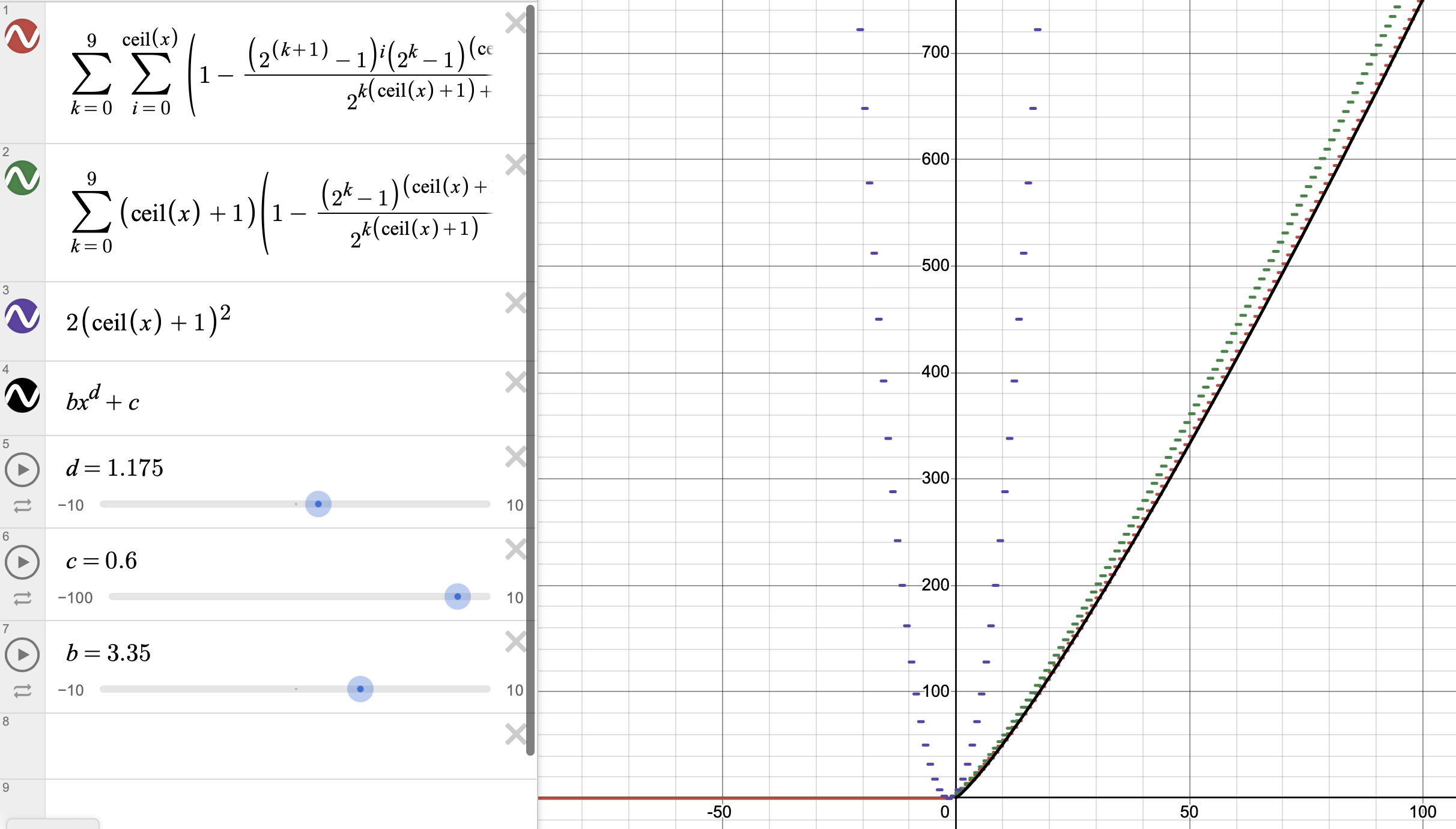}
    \caption{Graphical representation of the different upper bounds for $\calB(n)$.}
\end{figure}

This leads us to believe that the upper bound can be drastically improved if we find an alternative bound to equation \ref{eq:slack2}.

\section{A lower bound for $\calB(n)$}\label{lb}

In this section, we now prove that $\calB(n) \ge n+1$, letting us conclude that $\calB(n) = \Omega(n)$. We denote $h_n : [0,1)^n \to [0,1)$ to be the function in $\calF_n$ defined as $h_n(\bfx) = \min\left(1-2^{-(n+1)},(2^{n+1}-1)\min_{1 \le i \le n} x_i\right)$. Plots of $h_n$ are shown in the following figure for $n=1$ and $n=2$:

\begin{figure}[h]
    \centering
    \includegraphics[width=200pt]{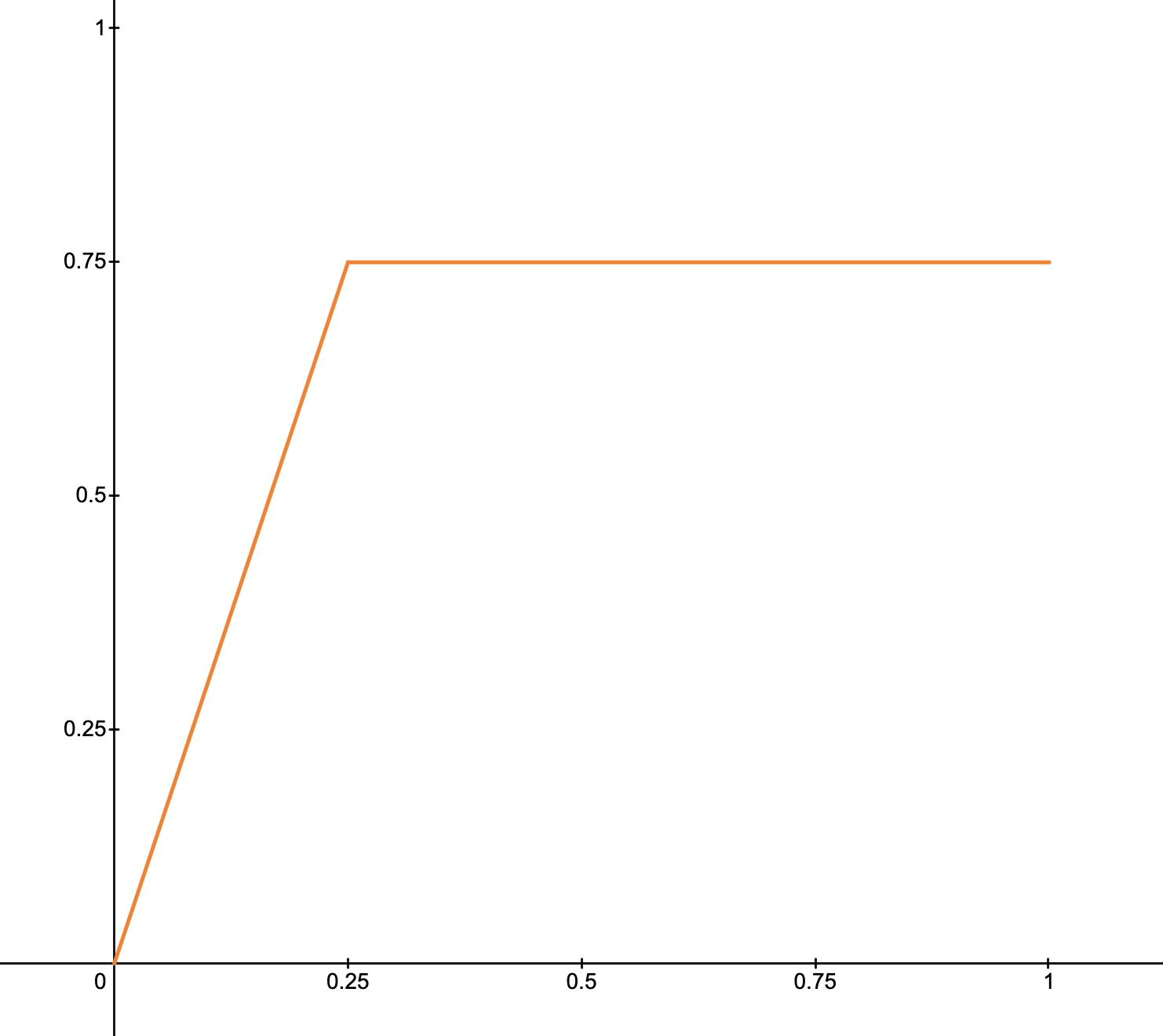}
    \caption{Graphical representation of $h_1$.}
\end{figure}
\begin{figure}[h]
    \centering
    \includegraphics[width=200pt]{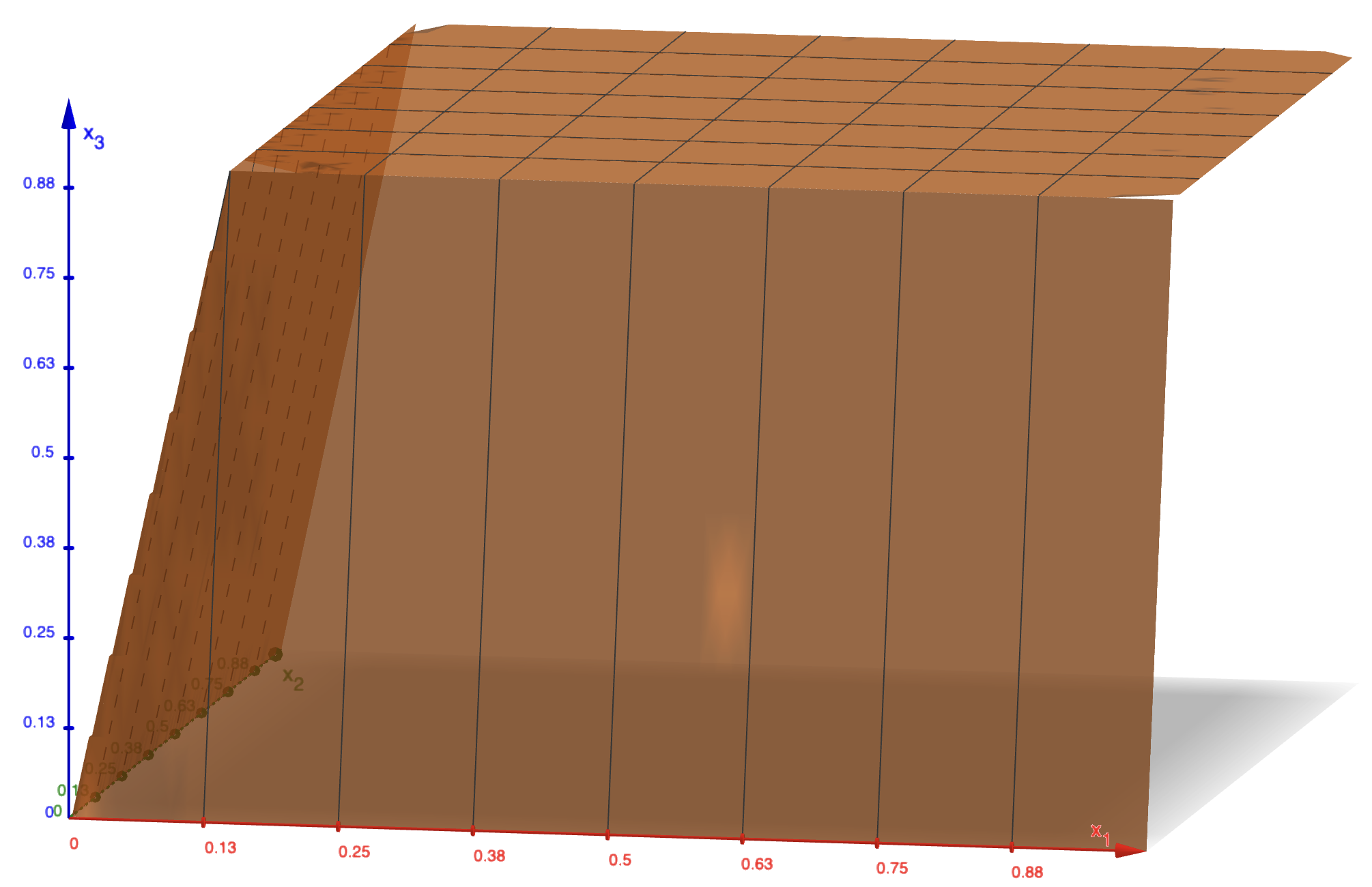}
    \caption{Graphical representation of $h_2$.}
\end{figure}

We use $h_n$ to prove our lower bound in the following Theorem.

\begin{thm}
    For any algorithm $A \in \calA_n$ and any $(\bfx,x_{n+1}) \in [0,1)^{n+1}$, $|A(h_n,\bfx,x_{n+1})| \ge n+1$. In particular, $\calB(n) \ge n+1$.
\end{thm}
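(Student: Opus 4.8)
The plan is to prove the pointwise statement $|A(h_n,\bfx,x_{n+1})| \ge n+1$ directly, from which the bound on $\calB(n)$ is immediate: since $|A(h_n,\bfX,X_{n+1})| \ge n+1$ holds for every realization, $\E|A(h_n)| \ge n+1$, and hence $\sup_{f \in \calF_n}\E|A(f)| \ge \E|A(h_n)| \ge n+1$ for every $A \in \calA_n$, giving $\calB(n) \ge n+1$. So the whole task reduces to showing that the algorithm cannot halt before it has revealed $n+1$ bits. By the crossing characterization in the setup, it suffices to prove the combinatorial claim: whenever the current string $s$ satisfies $|s| \le n$, the function $h_n$ \emph{crosses} the feasible set $S$ of $A$ after $s$, so the algorithm cannot stop there.

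To set up the claim, write $k_i$ for the number of $\bar{x}_i$'s in $s$, so $\sum_{i=1}^{n+1} k_i = |s| \le n$, and recall $S = \prod_{i=1}^{n+1}[x_i^*, x_i^*+2^{-k_i})$. Following the halting conditions, let $U = h_n(x_1^*+2^{-k_1},\ldots,x_n^*+2^{-k_n})$ and $L = h_n(x_1^*,\ldots,x_n^*)$ be the supremum and infimum of $h_n$ over $S$ (using that $h_n$ is increasing), and let $a = x_{n+1}^*$, $b = x_{n+1}^*+2^{-k_{n+1}}$ be the endpoints of the $x_{n+1}$-interval. The two halting conditions are exactly $U < a$ and $L > b$, so $h_n$ crosses $S$ precisely when $U \ge a$ and $L \le b$; the goal is to verify both inequalities. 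The driving observation is that $\sum_{i=1}^{n+1} k_i \le n$ over $n+1$ coordinates forces, by pigeonhole, at least one coordinate to have zero revealed bits.

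I would then split into two cases according to $k_{n+1}$. If $k_{n+1}=0$, then $a=0$ and $b=1$, and since $0 \le h_n \le 1-2^{-(n+1)} < 1$ everywhere, both $U \ge 0 = a$ and $L \le 1-2^{-(n+1)} < 1 = b$ hold, so $h_n$ crosses $S$. If $k_{n+1}\ge 1$, then $\sum_{i=1}^{n} k_i \le n-1$, so some coordinate $j \le n$ has $k_j=0$, whence $x_j^*=0$ and $L = h_n(x_1^*,\ldots,x_n^*)=0 \le b$, settling the lower inequality. For the upper inequality, the same bound gives $\max_{1 \le i \le n} k_i \le n-1$, hence $\min_{1 \le i \le n}(x_i^*+2^{-k_i}) \ge 2^{-(n-1)}$; then $(2^{n+1}-1)\cdot 2^{-(n-1)} \ge 3 > 1-2^{-(n+1)}$, so $h_n$ saturates at its cap, i.e. $U = 1-2^{-(n+1)}$. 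Finally $1-2^{-(n+1)} \ge 1-2^{-k_{n+1}} \ge x_{n+1}^* = a$, since $k_{n+1} \le n$ and $x_{n+1}^*$ is a $k_{n+1}$-bit truncation, so $U \ge a$ and $h_n$ crosses $S$.

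I expect the main obstacle to be less the arithmetic than pinning down the crossing characterization cleanly: I must make sure that ``the algorithm has not halted'' is genuinely equivalent to $U \ge a$ and $L \le b$, including the measure-zero boundary cases where an inequality is tight (e.g.\ $U = a$), so that the positive-measure definition of crossing is really met. The case bookkeeping---correctly locating the forced zero among the first $n$ coordinates versus the $(n+1)$-st---is the other place to be careful, but once the split on $k_{n+1}$ is made, each branch reduces to the short computations above.
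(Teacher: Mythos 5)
Your proposal is correct and follows essentially the same route as the paper: pigeonhole on the $n+1$ coordinates of a string of length at most $n$, a case split on whether $\bar{x}_{n+1}$ has been queried (your $k_{n+1}=0$ versus $k_{n+1}\ge 1$), and verification that $h_n$ crosses the feasible set in each case. The strictness concern you flag is harmless here: in the case $k_{n+1}=0$ every coordinate of the upper corner is at least $2^{-n}$, so in fact $U = 1-2^{-(n+1)} > 0 = a$, and all your remaining inequalities are already strict, so continuity of $h_n$ gives positive measure on both sides of the graph.
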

\begin{proof}
    Let $A \in \calA_n$ be any algorithm and let $s$ be a prefix of length less than $n$ of $A(f,\bfx,x_{n+1})$. By the pigeonhole principle, there must be one of $\bar{x}^{(1)},\ldots,\bar{x}^{(n+1)}$ that isn't included in $s$. Let $S$ be the current feasible set of $A$. If $\bar{x}^{(n+1)} \notin s$, then $S$ contains points with $x_{n+1}$-value 0 and points with $x_{n+1}$-value bigger than $1-2^{-(n+1)}$. Since $0 < h_n \le 1 - 2^{-(n+1)}$ almost everywhere, this implies that $f$ crosses $S$, meaning that $A$ cannot halt.

    Now, if $\bar{x}^{(n+1)}$ is included in $s$, then choose $1 \le i \le n$ such that $\bar{x}^{(i)}$ is not included in $s$. This means that $S$ contains a point $(\bfx,x_{n+1})$ with $x_i = 0$ and $x_{n+1} > 0$, thus $h_n(\bfx) < x_{n+1}$ since $h_n(\bfx) = 0$ by observation. Furthermore, since $s$ contains at most $n$ cuts, then $S$ must contain a point $(\bfx',x_{n+1}')$ such that $x'_i > 2^{-(n+1)}$ for all $1 \le i \le n$ and $x_{n+1}' < 1 - 2^{-(n+1)}$. Thus, $h_n(\bfx') > x_{n+1}$ since $h_n(\bfx') = 1 - 2^{-(n+1)}$ by observation. The existence of $\bfx$ and $\bfx'$ implies that $h_n$ strictly crosses $S$, and so that $A$ cannot halt. Since $s$ is an arbitrary string of length at most $n$, this implies that $A$ needs at least $n+1$ cuts to halt, i.e. $|A(h_n,\bfx,x_{n+1})| \ge n+1$ for any $(\bfx,x_{n+1}) \in [0,1)^{n+1}$, proving the desired result.

    We can use this deterministic result and replace $\bfx$ and $x_{n+1}$ by $\bfX$ and $X_{n+1}$, respectively, to get that $A(h_n) \ge n+1$. Taking the supremum over any function $f \in \calF_n$ gives that $\sup_{f \in \calF_n} A(f) \ge n+1$, and since this inequality is true for any algorithm $A \in \calA_n$, we can conclude that $\calB(n) \ge n+1$.
\end{proof}

\section{Conclusion}

To generate random numbers that follow a given distribution on $[0,1)^n$, we use rejection sampling driven by sequences of independent random bits. We show that the expected total number of bits that must be revealed to implement this method grows at least linearly and at most quadratically in $n$ in the worst case.

Closing the gap between these lower and upper bounds remains an open problem. In particular, we conjecture that both the lower and upper bounds can be improved. It would also be interesting to determine the exact value of $\mathcal{B}(n)$ for small values of $n$.

\section*{Acknowledgment}

The authors would like to thank Luc Devroye from McGill university for bringing this problem to their attention and suggesting to write a paper about it if results were found as well as for providing feedback and suggestions that improved the writing of this paper. The authors would also like to thank Jun Kai Liao from McGill university for actively participating in the early stages of this research project.


\begin{thebibliography}{00}
\bibitem{von} J. von Neumann. \textit{Various techniques used in connection with random digits.} National Bureau of Standards, Applied Math Series, 12:36–38, (1951)

\bibitem{dev} L. Devroye. \textit{Non-uniform random variate generation}, Springer, New York, (1986)

\bibitem{ce} M. Collantes, J. Escartin, \textit{Quantum random number generators}, Reviews of Modern Physics 89, 015004 (2017)

\bibitem{dcvrkh} A. Dubovskiy, T. Criss, A. Valli, L. Rehm, A. Kent, A. Haas, \textit{One trillion true random bits generated with a field programmable gate array actuated magnetic tunnel junction}, IEEE Magnetics Letters (2024)

\bibitem{kfjbb} K. Keith, S. Feldman, D. Jurgens, J. Bragg, R. Bhattacharya, \textit{RCT Rejection Sampling for Causal Estimation Evaluation}, Transactions on Machine Learning Research (TMLR) (2023)

\bibitem{xyxpwsljzxd} W. Xiong, J. Yao, Y. Xu, B. Pang, L. Wang, D. Sahoo, J. Li, N. Jiang, T. Zhang, C. Xiong, H. Dong, \textit{A Minimalist Approach to LLM Reasoning: from Rejection Sampling to Reinforce}, arXiv:2504.11343 (2025)

\bibitem{ar} J. Awan, V. Rao, \textit{Privacy-Aware Rejection Sampling}, arXiv:2108.00965 (2021)
\end{thebibliography}
\end{document}